\newtheorem{te}{Theorem}
\newtheorem{lem}{Lemma}
\newtheorem{de}{Definition}
\newtheorem{ex}{Example}
\author{R. Azuaje and A. M. Escobar-Ruiz\\
Departamento de F\'{i}sica, Universidad Aut\'onoma Metropolitana Unidad Iztapalapa,\\ San Rafael Atlixco 186, 09340, Ciudad de M\'exico, M\'exico}
\title{On particular integrability for (co)symplectic and (co)contact Hamiltonian systems}
\begin{document}
\maketitle

\begin{abstract}
As a generalization and extension of our previous paper [Escobar-Ruiz and Azuaje, J. Phys. A: Math. Theor. {\bf 57}, 105202 (2024)], in this work, the notions of particular integral and particular integrability in classical mechanics are extended to the formalisms of cosymplectic, contact and cocontact geometries. This represents a natural scheme to study nonintegrable time-dependent systems where only a part of the whole dynamics satisfies the conditions for integrability. Specifically, for Hamiltonian systems on cosymplectic, contact and cocontact manifolds, it is demonstrated that the existence of a particular integral allows us to find certain integral curves from a reduced, lower dimensional, set of Hamilton's equations. In the case of particular integrability, these trajectories can be obtained by quadratures. Notably, for dissipative systems described by contact geometry, a particular integral can be viewed as a generalization of the important concept of dissipated quantity as well. 

\noindent{\it Keywords\/}: particular integrability, Hamiltonian reduction, contact Hamiltonian Mechanics, dissipated quantities, nonintegrable systems.
\end{abstract}

\section{Introduction}
\label{sec1}

In a classical mechanical system, the existence of conserved quantities reduces the difficulty of solving the equations of motion. Also, it is common knowledge that in the symplectic framework the Liouville integrals of motion (conserved quantities) for a conservative Hamiltonian system are intimately connected to symmetries \cite{AMRC2008,AKN2006,Roman2020}. The famous Liouville theorem on integrability of Hamiltonian systems \cite{AMRC2008,AKN2006,BB98}, with $n$ degrees of freedom, states that the knowledge of $n$ functionally independent symmetries (constants of motion, Liouville first integrals) in involution enables us to solve the corresponding Hamilton's equations using quadratures, i.e., by performing a finite  number of algebraic operations, taking inverse functions, and possibly calculating the integrals of explicit known functions.

In Physics, for a given set of initial data in the corresponding domain, a constant of motion is viewed as a conserved quantity during the time evolution. For this reason, they can also be called global integrals. The concept of particular integral introduced in \cite{Turbiner2013} generalizes the notion of a global integral. In simple terms, a particular integral is a conserved quantity for possibly certain (sub)sets of initial conditions only. One important rationale to investigate particular integrals is that they allow us to study and \textit{solve} non-integrable systems in certain points and regions where the dynamics of the system satisfies the regularity requirements for integrability which, in turn, leads to what we call particular integrability. Accordingly, in \cite{escobar2023particular} the concepts of particular integral and particular integrability were introduced within the formalism of symplectic geometry, namely, they were considered for Hamiltonian systems defined on symplectic manifolds. Symplectic geometry very often is regarded as the natural geometric formalism to develop the Hamiltonian theory of classical mechanics. Nevertheless, only autonomous (conservative) systems can be described under this framework. This significant limitation justifies the present attempt to extend the previous study \cite{escobar2023particular} to the case of non-autonomous systems as well as the dissipative ones.

Other geometric constructions such as cosymplectic, contact as well as cocontact geometry admit alternative formulations of Hamiltonian mechanics. They have been proven to be instrumental in describing non-autonomous and dissipative mechanical systems within the Hamiltonian theory. Currently, contact and cocontact Hamiltonian mechanics are subjects of intense active research, see for example \cite{LL2020,BG2021,GLR2022,AE2023,azuaje2024lie}. For instance, in the case of dissipative systems, the so called dissipated quantities were studied in relation with Noether symmetries \cite{BG2021,GGMRR2020,LL2020}, giving as a result dissipation laws analogous to the conservation theorems occurring in conservative systems. In fact, we will show that a dissipated quantity is a special case of a particular integral. 

In \cite{escobar2023particular}, the particular integrability has been presented as a more general notion than Liouville integrability. Hence, they share important features. For example, a common key element is the property of functional independence of certain functions defined on the relevant manifold. It is worth mentioning this property and the corresponding process of reducing the dimension of the manifold: let $M$ be a smooth real manifold, it is said that the functions $g_{1},\cdots,g_{s}\in C^{\infty}(M)$ with $s\leq dim(M)$ are functionally independent if for any $p$ in a dense subset of $M$ the differential maps $dg_{1}|_{p},\cdots,dg_{s}|_{p}$ are linearly independent. In other words, $p$ is a regular point of the differentiable function $G:M\longrightarrow \mathbb{R}^{s}$ defined by $G=(g_{1},\cdots,g_{s})$. Next, we consider the level set $M_{g}=\lbrace x\in M: g_{i}(x)=\alpha_{i}, \alpha_{i}\in \mathbb{R}\rbrace$, and assuming the $g_{1},\ldots,g_{s}$ are functionally independent. Now, using the submersion level set theorem \cite{Lee2012} we obtain that $M_{g}$ is a smooth submanifold of $M$ of dimension $dim(M_g)=dim(M)-s$. In general, this reduction is beyond the standard separation of variables. 

The aim of this paper is to introduce the notions of particular integral and particular integrability in Hamiltonian mechanics within the theory of cosymplectic, contact and cocontact
geometry in detail. The associated reduction of the equations of motion, using particular integrals, is investigated in each case. We emphasize that the nature of the paper is conceptual, the main results are established in the form of Theorems and concrete examples to illustrate the key ideas are provided. 

This paper is structured in the following manner. In section \ref{sec2}, we present the notion of particular integral for a classical mechanical system defined by a smooth vector field on a smooth generic real manifold; in this section we emphasize the similarities and differences between global constants of motion (also called integrals of motion or first Liouville integrals) and particular integrals. Afterwards, in section \ref{sec3} we describe the notions of particular integral and particular integrability for Hamiltonian systems defined on a smooth manifold equipped with different geometric structures; this section is divided into three parts regarding the cosymplectic, contact and cocontact geometry, respectively. In the contact case, for a physically relevant dissipative Hamiltonian system, it is shown that a dissipated quantity is a special case of a particular integral explicitly, and the corresponding reduction of the equations of motion is carried out.

\section{Particular integrals for classical mechanical systems}
\label{sec2}

Classical Newtonian mechanics is focused to describe the time evolution (motion) of physical systems whose future and past are unambiguously determined by the initial data, i.e., the set of initial positions and velocities of all particles (points) forming the system; such systems are called (classical) mechanical systems. For a mechanical system, the phase space is the set whose elements are the sets of positions and velocities of all points of the given system \cite{Arnold92}. Remarkably, Poincaré visualized a mechanical system as a special vector field on phase space, where a trajectory is a smooth curve tangent at each of its points to the vector based at that point; which lead him to the important result of a smooth manifold as the phase space in classical mechanics \cite{AMRC2008}. 

Let $M$ be a smooth real manifold of dimension $n$ and $V$ a smooth vector field on $M$. $V$ defines a (classical) mechanical system on $M$ whose trajectories are the integral curves of $V$ \cite{AMR88}. Let us remember that an integral curve of $V$ is a curve $\gamma:I\subset \mathbb{R}\longrightarrow M$ with the property $\frac{d}{dt}\gamma(t)=V_{\gamma(t)}$ $\forall t\in I$. Let $(x^{1},\cdots,x^{n})$ be local coordinates on $M$, the local expression of $V$ is $V=V^{1}(x^{1},\cdots,x^{n})\frac{\partial}{\partial x^{1}}+\cdots+V^{n}(x^{1},\cdots,x^{n})\frac{\partial}{\partial x^{n}}$. Also, the integral curves $\gamma(t)=(x^{1}(t),\cdots,x^{n}(t))$ of $V$ satisfy the system of $n$ first order ordinary differential equations (ODE's)
\begin{equation}
\label{eqmot}
\left\lbrace \begin{array}{c}
\frac{d}{dt}{x}^{1} \,=\, V^{1}(x^{1},x^2\cdots,x^{n}),\\ 
\vdots \\ 
\frac{d}{dt}{x}^{n}\,=\, V^{n}(x^{1},x^2,\cdots,x^{n}).
\end{array} \right.
\end{equation}
Take a mechanical system $(M,V)$, where $M$ denotes the corresponding phase space and $V$ is the dynamical vector field. The system of first order $n$ differential equations (\ref{eqmot}) is the corresponding system of (local) equations of motion (the motion of a system in classical mechanics is described using ODE's \cite{Arnold92}). The time evolution of any observable $g=g(x)\in C^{\infty}(M)$ along the integral curves of the system reads
\begin{equation}
\label{eqevo}
\frac{d}{dt}g(\gamma(t))=dg_{\gamma(t)}(\frac{d}{dt}\gamma(t))=(\frac{d}{dt}\gamma(t))(g)=V_{\gamma(t)}(f)=(Vg)(\gamma(t))\ ,
\end{equation}
so the temporal evolution of $g$ is simply given by
\begin{equation}
\dot{g}\ = \ Vg \ .
\end{equation}
The function $g$ is called a conserved quantity (or an integral of motion) for the system $(M,V)$ if $g$ remains constant on its integral curves, which is equivalent to the condition $Vg=0$. The occurrence of a constant of motion enable us to derive a (reduced) mechanical system whose dynamics is contained in the original (bigger) system, i.e., we can construct a reduced system such that its trajectories are also integral curves of the original one. Thus, if the system admits a constant of motion one can look for integral curves that are (locally) solutions of a reduced system of dynamical equations. Indeed, let us suppose that $g$ is an integral of motion of the system $(M,V)$, we have that the level set 
\[ 
M_{g} \ = \ \lbrace\, x\in M: \ g(x)\,=\,\alpha, \, \alpha\in\mathbb{R}\rbrace \ ,
\]
defines a (smooth) submanifold of $M$ of codimension $1$ for regular values of $g$ \cite{Lee2012}. Moreover, it is invariant under the dynamics of $V$. Explicitly, if $\gamma:I\longrightarrow M$ is an (integral) curve of $V$ such that $\gamma(t_{0})\in M_{g}$ for a given $t_{0}\in I$, also   $\gamma(t)\in M_{g}$ at any $t\in I$. We can prove this last statement as follows: let us suppose that $\gamma:I\longrightarrow M$ is an integral curve of $V$ with the property that $\gamma(t_{0})\in M_{g}$ at $t_{0}\in I$, hence $g(\gamma(t_{0}))=\alpha$ and since $g$ is a conserved quantity it is constant on the whole trajectory, i.e., $g(\gamma(t))=\alpha$ for every $t\in I$, so we conclude that $\gamma(t)\in M_{g}$ for every $t\in I$ as well. Now we can look for trajectories of the system $(M,V)$ that live in $M_{g}$. 

It is important remarking that the set of the trajectories of the system $(M,V)$ that live in $M_{g}$ is not the whole set of the trajectories of the system, but some element of this first bigger set are simply determined by a (reduced) system of ODE's. Indeed, if $g$ is an integral of motion of $(M,V)$, the integral curves that live in $M_{g}$ are (locally) solutions of a system of $n-1$ differential equations: the trajectories of $V$ that live in $M_{g}$ are integral curves of the restriction $V|_{M_{g}}$ of the vector field $V$ to $M_{g}$, so if $(y^{1},\cdots,y^{n-1})$ are local coordinates on $M_{g}$ then such integral curves are solutions of the differential equation in vectorial form $\dot{Y}=V|_{M_{g}}(Y)$, where $Y=(y^{1},\cdots,y^{n-1})$, which is identical to a set of $n-1$ differential equations.

Now we present the concept of particular integral explaining its relation with a constant of motion.
\begin{de}
We call $p\in C^{\infty}(M)$ a particular integral of motion for $(M,V)$ if the equation $Vp=a\,p$ is satisfied for some function $a\in C^{\infty}(M)$.
\end{de}
In the case $a=0$ we have a constant of motion. Similarly to the previous case where a constant of motion is involved, the occurrence of a particular integral allows us to look for certain trajectories of the system that are (locally) solutions of a (reduced) set of differential equations of motion. We have the following result.
\begin{lem}
If $p$ is a particular integral of $(M,V)$ it follows that the level set $M_{p}$ is invariant (closed) under the dynamics of $V$.
\end{lem}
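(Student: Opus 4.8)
The plan is to reduce the statement to the uniqueness theorem for a scalar linear first‑order ODE. Here the relevant level set attached to a particular integral is the zero set $M_{p}=\lbrace x\in M:\ p(x)=0\rbrace$, since the defining relation $Vp=a\,p$ distinguishes precisely the value $p=0$. So let $\gamma:I\longrightarrow M$ be an integral curve of $V$ with $\gamma(t_{0})\in M_{p}$ for some $t_{0}\in I$, i.e. $p(\gamma(t_{0}))=0$; the goal is to show $p(\gamma(t))=0$ for every $t\in I$.

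First I would introduce the scalar function $u:I\longrightarrow\mathbb{R}$, $u(t)=p(\gamma(t))$, and compute its derivative using the evolution formula $\dot g=Vg$ established above: $\dot u(t)=dp_{\gamma(t)}(\dot\gamma(t))=(Vp)(\gamma(t))=a(\gamma(t))\,p(\gamma(t))=a(\gamma(t))\,u(t)$. Hence $u$ solves the linear homogeneous equation $\dot u=\tilde a(t)\,u$ on $I$, where $\tilde a(t):=a(\gamma(t))$ is smooth because $a\in C^{\infty}(M)$ and $\gamma$ is smooth. With the initial condition $u(t_{0})=0$, either the uniqueness part of the existence--uniqueness theorem (the identically zero function is a solution, hence the solution) or the explicit integrating factor $\exp\!\left(-\int_{t_{0}}^{t}\tilde a(s)\,ds\right)$, which gives $u(t)=u(t_{0})\exp\!\left(\int_{t_{0}}^{t}\tilde a(s)\,ds\right)=0$, yields $u\equiv 0$ on $I$. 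Therefore $\gamma(t)\in M_{p}$ for all $t\in I$, which is exactly the invariance of $M_{p}$ under the flow of $V$.

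I do not expect a real obstacle; the only points needing care are two. First, one must take $M_{p}$ to be the zero level set: the same computation shows that along a trajectory $p$ is multiplied by the generally non‑constant factor $\exp\!\left(\int\tilde a\right)$, so a level $\lbrace p=c\rbrace$ with $c\neq 0$ need not be invariant, which is the essential difference with the constant‑of‑motion case. Second, one should note that the argument holds on the whole interval $I$ of definition of $\gamma$, since $\tilde a$ is defined and smooth wherever $\gamma$ is. If in addition one wants $M_{p}$ to be an embedded submanifold of codimension one, one invokes the submersion level set theorem at regular values of $p$ exactly as in the constant‑of‑motion discussion, but this is not required for the invariance claim.
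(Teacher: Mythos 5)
Your argument is correct and is precisely the standard one: the paper itself omits the proof, deferring to the symplectic-case reference, and the expected argument there is exactly your reduction to the linear scalar ODE $\dot u=\tilde a(t)\,u$ with $u(t_0)=0$, concluding $u\equiv 0$ by uniqueness or the integrating factor. Your remark that $M_p$ must be taken as the \emph{zero} level set (unlike the constant-of-motion case, where any level $\alpha$ works) is the one genuinely important point, and it matches the convention the paper uses throughout Section~3.
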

The proof of the previous lemma is completely analogous to the one presented in \cite{escobar2023particular} for the symplectic framework. For further details we refer the reader to \cite{escobar2023particular}.  

Now, given a particular integral we can look for the trajectories of the system $(M,V)$ that live in $M_{p}$, and like for a constant of motion, if $p$ is a particular integral of $(M,V)$ then the integral curves that live in $M_{p}$ are (locally) solutions of a (reduced) system of ODE's, $n-1$ equations.

Up to this point general classical mechanical systems have been considered. We are interested in Hamiltonian systems, these are mechanical systems on smooth (real) manifolds where the dynamical vector fields are the Hamiltonian vector fields for distinguished functions called the Hamiltonian functions. The construction of Hamiltonian vector fields on a smooth manifold requires the existence of an additional geometric structure defining a Poisson structure or a more general structure called a Jacobi structure (these last structures are not common in the literature and are less known than the Poisson ones). Given a smooth real manifold equipped with a Poisson structure or a Jacobi structure, the corresponding Hamiltonian vector field for a given function is assigned according to a one-to-one correspondence, between vector fields and 1-forms on such manifold, defined by means of the geometric structure. Generally Poisson structures are defined by symplectic structures, in fact, symplectic geometry is considered the natural geometric framework where the theory of Hamiltonian mechanics is developed (see \cite{AMRC2008,LR89,Torres2020,Lee2012,marsden2013introduction}). Nevertheless, only autonomous conservative systems can be described under this formalism. Other geometric frameworks allow alternative formulations of Hamiltonian mechanics. They are instrumental to describe physically relevant non-autonomous and dissipative mechanical systems as Hamiltonian systems. These formalisms are cosymplectic geometry, contact geometry and cocontact geometry. For details on these geometric formalisms and the associated formulations of Hamiltonian mechanics see \cite{LR89,CLL92,LS2017,LL2019,LS2017,BCT2017,Bravetti2017,Letal2022}.

\section{Hamiltonian systems: particular integrals and particular integrability }
\label{sec3}

In this section the geometric notion of particular integral is introduced for Hamiltonian systems on cosymplectic, contact and cocontact manifolds; we employ the same notation as in \cite{AE2023,azuaje2024lie} where a brief review of the geometric frameworks and the associated formulations of Hamiltonian mechanics is given. Afterwards, the concept of particular integrability is studied in each case. Analogous results to those given in \cite{escobar2023particular} for the symplectic case, are presented in detail.

\subsection{Cosymplectic Hamiltonian systems}
\label{subcosymplectic}

Let $(M,\Omega,\eta,H)$ be a cosymplectic Hamiltonian system with $n$ degrees of freedom (${\rm dim}(M)=2n+1$). In order to introduce the notion of particular integrability in the cosymplectic formalism, we follow the spirit of the original notion, i.e., a particular integral $g$ must be characterized by the condition $\lbrace g,H\rbrace=ag$ for some function $a\in C^{\infty}(M)$. So we consider a particular integral of $(M,\Omega,\eta,H)$ as a function $g\in C^{\infty}(M)$ such that $X_{H}g=ag$ for some function $a\in C^{\infty}(M)$; in terms of the Poisson bracket we have that $g$ is a particular integral if and only if $\lbrace g,H\rbrace=ag$. It is important to remark that in the cosymplectic framework, the dynamical vector field of the Hamiltonian system $(M,\Omega,\eta,H)$ is given by $E_{H}=X_{H}+R$ ($R$ being the Reeb vector field), so in order to be able to look for the trajectories of the system that live in $M_{g}=\lbrace\, x\in M: \ g(x)\,=\,0\,\rbrace$ we must have $E_{H}g|_{g=0}=0$, which is fulfilled by requiring that $Rg=0$ ($g$ is time-independent). In conclusion, we propose the following definition.
\begin{de}
A particular integral of $(M,\Omega,\eta,H)$ is a function $g\in C^{\infty}(M)$ which satisfies $Rg=0$ and $X_{H}g=ag$ for a function $a\in C^{\infty}(M)$.
\end{de}
Assuming $g\in C^{\infty}(M)$ is a particular integral for $(M,\Omega,\eta,H)$, one can look for the trajectories of the system that live in $M_{g}=\lbrace\, x\in M: \ g(x)\,=\,0\,\rbrace $. Provided that $0$ is a regular value of $g$, $M_{g}$ corresponds to a (smooth) submanifold of dimension $(2n+1)-1=2n$. In canonical coordinates $(q^{1},\cdots,q^{n},p_{1},\cdots,p_{n},t)$ on $(M,\Omega,\eta)$, the trajectories of the system are the solutions of the Hamilton's equations of motion 
\begin{equation}
\dot{q^{i}} =\frac{\partial H}{\partial p_{i}}, \hspace{1cm}
\dot{p_{i}} =-\frac{\partial H}{\partial q^{i}}\qquad ;\qquad i=1,2,3,\ldots,n \ .
\end{equation}
The construction of a reduced Hamiltonian system by mean of a particular integral in the cosymplectic case is completely analogous to the one in the symplectic case \cite{escobar2023particular}, namely, we can solve the original system of equations of motion by solving first a reduced system of Hamilton equations of motion and afterwards integrating a differential equation.

For cosymplectic Hamiltonian systems the notion of Liouville integrability is completely analogous to the one in the symplectic case, namely, for a cosymplectic Hamiltonian system with $n$ degrees of freedom, the existence of $n$ functionally independent integrals of motion in involution allows to find the solutions of the Hamilton equations of motion by quadratures \cite{GMS2002,JL2023,LMV2011}. Now we introduce the notion of particular integrability in the framework of cosymplectic geometry. Observe that the concept of particular involution only involve the Poisson bracket, so in the cosymplectic case we have the same notion as in the symplectic case, i.e., the functions  $g_{1},\cdots,g_{s}\in C^{\infty}(M)$ are in particular involution if they satisfy the condition 
\begin{equation}
\label{eqPS}
\lbrace g_{i},g_{j}\rbrace=a^{ij}_{1}g_{1}+\cdots+a^{ij}_{s}g_{s}
\end{equation}
for some functions $a^{ij}_{1},\ldots,a^{ij}_{s}\in C^{\infty}(M)$. Eventually, we arrive to the following theorem analogous to the theorem on particular integrability given in \cite{escobar2023particular} (theorem 2) for the symplectic case.
\begin{te}
\label{tePI2}
Let $g_{1},\cdots,g_{n}\in C^{\infty}(M)$ be a set of  functionally independent elements in particular involution such that $R\,g_{i}=0$. If for each $i\in \lbrace 1,\cdots,n\rbrace$ $\lbrace g_{i},H\rbrace=b^{i1}g_{1}+\cdots+b^{in}g_{n}$ for some $b^{ij}\in C^{\infty}(M)$, then the trajectories of the Hamiltonian system $(M,\Omega,\eta,H)$ lying in $M_{g}=\lbrace x\in M: \ g_{1}(x)=\cdots=g_{n}(x)=0\rbrace$ can be obtained using quadratures. 
\end{te}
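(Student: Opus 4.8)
The plan is to imitate the symplectic argument of \cite{escobar2023particular}: reduce the dynamics to the common zero set $M_{g}$ and then integrate the restricted flow by quadratures.

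First I would check that $M_{g}=\{x\in M:\ g_{1}(x)=\cdots=g_{n}(x)=0\}$ is a smooth invariant submanifold of dimension $n+1$. By functional independence of the $g_{i}$ and the submersion level set theorem, $M_{g}$ is a smooth submanifold with $\dim M_{g}=(2n+1)-n=n+1$. For invariance, note that $Rg_{i}=0$ by hypothesis while $X_{H}g_{i}=\{g_{i},H\}=\sum_{j}b^{ij}g_{j}$, so along any integral curve $\gamma$ of $E_{H}=X_{H}+R$ the functions $u_{i}(s):=g_{i}(\gamma(s))$ satisfy the linear system $\dot{u}_{i}=\sum_{j}(b^{ij}\circ\gamma)\,u_{j}$; hence if $\gamma(s_{0})\in M_{g}$ then $u_{i}\equiv0$ by uniqueness, i.e.\ $\gamma$ stays in $M_{g}$ (this is the argument of the Lemma applied to the joint level set). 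Thus the trajectories of $(M,\Omega,\eta,H)$ meeting $M_{g}$ are exactly the integral curves of the restriction $E_{H}|_{M_{g}}$, and it remains to integrate this $(n+1)$-dimensional system by quadratures.

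Next I would build a distinguished frame on $M_{g}$. The vector fields $X_{g_{1}},\ldots,X_{g_{n}}$ are tangent to $M_{g}$, since $X_{g_{i}}g_{j}=\{g_{j},g_{i}\}$ vanishes on $M_{g}$ by particular involution; the Reeb field $R$ is tangent to $M_{g}$ because $Rg_{i}=0$. Since $\eta(X_{g_{i}})=0$, $\eta(R)=1$, and the $dg_{i}$ (hence the $X_{g_{i}}$) are independent on a dense open set, $\{R,X_{g_{1}},\ldots,X_{g_{n}}\}$ is there a local frame for $TM_{g}$. The bracket relations close up on $M_{g}$: from $\mathcal{L}_{R}\Omega=0$ and $\mathcal{L}_{R}\eta=0$ one gets $[R,X_{g_{i}}]=X_{Rg_{i}}=0$, and from $X_{fh}=fX_{h}+hX_{f}$ together with $[X_{g_{i}},X_{g_{j}}]=X_{\{g_{i},g_{j}\}}$ one gets $[X_{g_{i}},X_{g_{j}}]=\sum_{k}a^{ij}_{k}X_{g_{k}}$ on $M_{g}$. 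Hence, restricted to $M_{g}$, the distribution $\mathcal{D}=\mathrm{span}(X_{g_{1}},\ldots,X_{g_{n}})$ is involutive, $R$ is a symmetry of it, and locally $M_{g}\cong L\times\mathbb{R}$ with $L$ a leaf of $\mathcal{D}$ and the $\mathbb{R}$-factor generated by $R$ (the ``time'').

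Finally I would identify and integrate the restricted dynamics. Because $\eta(X_{H})=0$ and $X_{H}|_{M_{g}}\in TM_{g}$, expanding $X_{H}|_{M_{g}}$ in the frame and contracting with $\eta$ kills its $R$-component, so
\[
E_{H}\big|_{M_{g}}\;=\;R+\sum_{i=1}^{n}c^{i}\,X_{g_{i}}\big|_{M_{g}},\qquad c^{i}\in C^{\infty}(M_{g}).
\]
In coordinates $(t,\phi^{1},\ldots,\phi^{n})$ on $M_{g}$ with $R=\partial/\partial t$ and $\phi^{a}$ parametrizing $L$, the $R$-part gives $\dot{t}=1$ (a trivial quadrature), while the remaining equations $\dot{\phi}^{a}=\Phi^{a}(t,\phi)$ are those of the (time-dependent) flow $\sum_{i}c^{i}X_{g_{i}}$ on the $n$-dimensional leaf $L$; these are integrated by quadratures exactly as in Theorem~2 of \cite{escobar2023particular}, with $L$ playing the role of the symplectic level set there. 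The main obstacle is precisely this last step: the functions $a^{ij}_{k}$, $b^{ij}$ and the coefficients $c^{i}$ need not vanish on $M_{g}$, so the $X_{g_{i}}|_{M_{g}}$ do not commute in general and one cannot simply invoke action--angle coordinates; what makes the quadratures go through is that, \emph{restricted to} $M_{g}$, the frame $\{R,X_{g_{1}},\ldots,X_{g_{n}}\}$ is involutive with $R$ central, reproducing the situation of \cite{escobar2023particular}. The cosymplectic identities used above ($\eta(X_{H})=\eta(X_{g_{i}})=0$, $X_{fh}=fX_{h}+hX_{f}$, $[R,X_{f}]=X_{Rf}$, $\mathcal{L}_{R}\Omega=\mathcal{L}_{R}\eta=0$) are routine but must be verified to legitimately transport the symplectic proof.
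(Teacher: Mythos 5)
Your proposal follows essentially the same route as the paper's proof: restrict the dynamics to the invariant submanifold $M_{g}$, note that $R$ and the $X_{g_{i}}$ are tangent to it, split off the trivial $\dot{t}=1$ equation, and integrate the remaining $n$-dimensional time-dependent system by Lie's theorem on solvable symmetry algebras, with $X_{g_{1}}|_{M_{g}},\ldots,X_{g_{n}}|_{M_{g}}$ playing the role of the symmetries. The one substantive point where you diverge is that you explicitly flag that the structure functions $a^{ij}_{k}$ and $b^{ij}$ need not vanish on $M_{g}$, so the restricted fields form an involutive distribution rather than a literally Abelian algebra -- a subtlety the paper's proof passes over when it asserts $[X_{H}|_{M_{g}},X_{g_{i}}|_{M_{g}}]=X_{\{g_{i},H\}}|_{M_{g}}=0$ -- but this does not alter the overall argument.
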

Observe that $g_{1},\cdots,g_{n}$ in theorem \ref{tePI2} may be either particular integrals or even constants of motion in particular involution, of course we can see that the condition $\lbrace g_{i},H\rbrace=b^{i1}g_{1}+\cdots+b^{in}g_{n}$ is more general than that defining a particular integral. More essentially, we can also observe that we have not taken $H$ as one of the functions $g_{1},\cdots,g_{n}$, it is because in the cosymplectic framework $H$ is not necessarily a particular integral (we may have $RH\neq 0$). 

Before presenting the proof of theorem \ref{tePI2}, we revise the Lie integrability by quadratures for a special kind of time-dependent smooth vector fields. Let $v$ be a smooth vector field over $\mathbb{R}^{n}\times\mathbb{R}$ where $v(x,t)=v^{1}(x,t)\frac{\partial}{\partial x^{1}}+v^{2}(x,t)\frac{\partial}{\partial x^{2}}+\cdots+v^{n}(x,t)\frac{\partial}{\partial x^{n}}+\frac{\partial}{\partial t}$ (observe that this is the local form of the evolution vector field $E_{H}$), with $t$ the coordinate in $\mathbb{R}$. The equations of motion of the dynamical system defined by the vector field $v$ read
\begin{equation}\label{eqt}
 \begin{array}{c}
\dot{x}^{1}=v^{1}(x^{1},x^2,\ldots,x^{n},t\,)\,, \\
\vdots \\
\dot{x}^{n}=v^{n}(x^{1},x^2,\ldots,x^{n},t\,)\,,\\
\dot{t} = \ 1 \ .
\end{array} 
\end{equation}
We trivially integrate the last equation to have the solution for the variable $t$. Thus, we solely need to solve a reduced system of $n$ ODE's. Explicitly
\begin{equation}\label{eqtreduced}
\begin{array}{c}
\dot{x}^{1}=v^{1}(x^{1},x^2,\ldots,x^{n},t\,)\,, \\
\vdots \\
\dot{x}^{n}=v^{n}(x^{1},x^2,\ldots,x^{n},t\,)\ ,
\end{array} 
\end{equation}
which can be solved by quadratures provided the existence of $n$ linearly independent vector fields of the form $u^{1}(x,t)\frac{\partial}{\partial x^{1}}+\cdots+u^{n}(x,t)\frac{\partial}{\partial x^{n}}$ that generate a solvable Lie algebra of symmetries for the vector field $\overline{v}=v^{1}(x,t)\frac{\partial}{\partial x^{1}}+v^{2}(x,t)\frac{\partial}{\partial x^{2}}+\cdots+v^{n}(x,t)\frac{\partial}{\partial x^{n}}$ \cite{AKN2006,CFGR2015,Cetal2016}.

The following result (well-known in the symplectic framework) is instrumental for the proof of theorem \ref{tePI2}.
\begin{lem}
\label{lem2}
If $g_{1},\cdots,g_{s}\in C^{\infty}(M)$ are functionally independent functions such that $Rg_{i}=0$ for $i=1,\cdots,s$ then the Hamiltonian vector fields $X_{g_{1}},\ldots,X_{g_{s}}$ are linearly independent.
\end{lem}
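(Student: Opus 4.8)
The plan is to transfer the desired linear independence of $X_{g_{1}},\dots,X_{g_{s}}$ to the linear independence of the differentials $dg_{1},\dots,dg_{s}$ (which is guaranteed by functional independence) through the bundle isomorphism carried by the cosymplectic structure; the hypothesis $Rg_{i}=0$ is precisely what makes this transfer faithful.

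First I would bring in the musical morphism of $(M,\Omega,\eta)$, namely $\flat\colon TM\to T^{*}M$, $\flat(X)=\iota_{X}\Omega+(\iota_{X}\eta)\,\eta$. Because $\eta\wedge\Omega^{n}$ is a volume form, $\flat$ is a vector bundle isomorphism; the only thing to verify is injectivity, which follows by contracting $\flat(X)=0$ with the Reeb vector field $R$ (using $\iota_{R}\Omega=0$ and $\iota_{R}\eta=1$) to get $\iota_{X}\eta=0$, and then using the non-degeneracy of $\Omega$ restricted to $\ker\eta$ to conclude $X=0$. In this language the Hamiltonian vector field of $g$ is $X_{g}=\flat^{-1}\!\left(dg-(Rg)\,\eta\right)$, equivalently $\iota_{X_{g}}\Omega=dg-(Rg)\,\eta$ and $\iota_{X_{g}}\eta=0$.

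Next I would argue pointwise. By functional independence there is a dense subset $U\subseteq M$ on which $dg_{1}|_{p},\dots,dg_{s}|_{p}$ are linearly independent. Fix $p\in U$ and suppose $\sum_{i=1}^{s}c_{i}\,X_{g_{i}}|_{p}=0$ with $c_{i}\in\mathbb{R}$. Applying the linear isomorphism $\flat_{p}$ and then invoking the hypothesis $Rg_{i}=0$, which collapses $\flat(X_{g_{i}})=dg_{i}-(Rg_{i})\eta$ to $\flat(X_{g_{i}})=dg_{i}$, one gets $\sum_{i=1}^{s}c_{i}\,dg_{i}|_{p}=0$, hence $c_{1}=\dots=c_{s}=0$. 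Thus $X_{g_{1}}|_{p},\dots,X_{g_{s}}|_{p}$ are linearly independent for every $p$ in the dense set $U$, which is exactly the assertion that $X_{g_{1}},\dots,X_{g_{s}}$ are linearly independent vector fields (and, in particular, independent over $C^{\infty}(M)$).

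The only genuinely delicate step is the passage through $\flat$: the hypothesis $Rg_{i}=0$ is indispensable there, because otherwise $\flat(X_{g_{i}})$ carries the extra term $-(Rg_{i})\eta$ and a linear relation among the $X_{g_{i}}$ need not descend to one among the $dg_{i}$. The remaining ingredients, namely the non-degeneracy of the cosymplectic pairing and the defining relation for $X_{g}$, are routine.
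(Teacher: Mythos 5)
Your proof is correct and follows essentially the same route as the paper: both arguments contract the putative linear relation with the cosymplectic structure, use $Rg_{i}=0$ to reduce $\iota_{X_{g_{i}}}\Omega$ (equivalently $\flat(X_{g_{i}})$) to $dg_{i}$, and then invoke functional independence at regular points. Your version merely makes the musical isomorphism and the pointwise argument explicit, which the paper leaves implicit.
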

\begin{proof}
Let us suppose that $\alpha_{1}X_{g_{1}}+\cdots+\alpha_{s}X_{g_{s}}=0$ for some $\alpha_{1},\cdots,\alpha_{s}\in\mathbb{R}$. We have that $\left( \alpha_{1}X_{g_{1}}+\cdots+\alpha_{s}X_{g_{s}}\right) \lrcorner\Omega=\alpha_{1}dg_{1}+\cdots+\alpha_{s}dg_{s}$
then 
\begin{equation}
\alpha_{1}dg_{1}+\cdots+\alpha_{s}dg_{s}=0,
\end{equation}
which means that for regular points (points where the differential $dg_{1},\cdots,dg_{s}$ are linearly independent) the vector fields $X_{g_{1}},\ldots,X_{g_{k}}$ are linearly independent. Of course in the symplectic framework this result is obtained from the fact that $X_{g}\lrcorner\omega=dg$ for each $g\in C^{\infty}(M)$.
\end{proof}

Now we present the proof of theorem \ref{tePI2}.
\begin{proof}
From the fact that $g_{1},\cdots,g_{n}$ are functionally independent on $M$ it follows that $M_{g}=\lbrace x\in M: \ g_{1}(x)=\cdots=g_{n}(x)=0\rbrace$ defines a (smooth) submanifold of $M$ with dimension $dim(M)-n=2n+1-n=n+1$, and since $Rg_{1}=\cdots=Rg_{n}=0$, i.e., the functions $g_{1},\cdots,g_{n}$ are time-independent, then around any point in $M_{g}$ one can find local coordinates $(y^{1},\cdots,y^{n},t)$.

On the other hand we have that the evolution vector field is tangent to $M_{g}$, indeed, $E_{H}g_{i}=\lbrace g_{i},H\rbrace+Rg_{i}=b^{i1}g_{1}+\cdots+b^{in}g_{n}$ which is zero on $M_{g}$; so the trajectories of the Hamiltonian system $(M,\Omega,\eta,H)$ that live in $M_{g}$ are integral curves of $E_{H}|_{M_{g}}$ which in local coordinates $(y^{1},\cdots,y^{n},t)$ on $M_{g}$ has the form
\begin{equation}
E_{H}(y,t)=v^{1}(y,t)\frac{\partial}{\partial y^{1}}+\cdots+v^{n}(y,t)\frac{\partial}{\partial y^{n}}+\frac{\partial}{\partial t},
\end{equation} 
i.e., the trajectories of $(M,\Omega,\eta,H)$ that live in $M_{g}$ have the local form $\gamma(t)=(y^{1}(t),\cdots,y^{n}(t),t)$ where the functions $y^{1}(t),\cdots,y^{n}(t)$ are solutions of a system of differential equations of the form (\ref{eqtreduced}), namely,
\begin{equation}
\left\lbrace \begin{array}{c}
\dot{y}^{1}=v^{1}(y^{1},y^2,\ldots,y^{n},t), \\
\vdots \\
\dot{y}^{n}=v^{n}(y^{1},y^2,\ldots,y^{n},t),
\end{array} \right. ,
\end{equation} 
which can be solved by quadratures provided the existence of $n$ linearly independent vector fields of the form $u^{1}(y,t)\frac{\partial}{\partial y^{1}}+\cdots+u^{n}(y,t)\frac{\partial}{\partial y^{n}}$ that form a solvable Lie algebra of symmetries of the vector field $\overline{E}_{H}|_{M_{g}}=v^{1}(y,t)\frac{\partial}{\partial y^{1}}+\cdots+v^{n}(y,t)\frac{\partial}{\partial y^{n}}$. We have that the vector fields $X_{g_{1}},\ldots,X_{g_{n}}$ are tangent to the submanifold $M_{g}$ because $X_{g_{i}}g_{j}=\lbrace g_{j},g_{i}\rbrace$ which is zero on $M_{g}$, so we can consider the vector fields $X_{g_{1}}|_{M_{g}},\cdots,X_{g_{n}}|_{M_{g}}$ on $M_{g}$, which have the local form $u_{i}^{1}(y,t)\frac{\partial}{\partial y^{1}}+\cdots+u_{i}^{n}(y,t)\frac{\partial}{\partial y^{n}}$ and they generate a solvable Abelian Lie algebra with the Lie bracket of vector fields. Additionally, we have that $X_{g_{1}}|_{M_{g}},\cdots,X_{g_{n}}|_{M_{g}}$ are symmetries of $\overline{E}_{H}|_{M_{g}}$, indeed, we can observe that $\overline{E}_{H}|_{M_{g}}=X_{H}|_{M_{g}}$ so 
\begin{equation}
[\overline{E}_{H}|_{M_{g}},X_{g_{i}}|_{M_{g}}]=[X_{H}|_{M_{g}},X_{g_{i}}|_{M_{g}}]=X_{\lbrace g_{i},H\rbrace}|_{M_{g}}=0.
\end{equation}
We conclude that the vector fields $X_{g_{1}}|_{M_{g}},\cdots,X_{g_{n}}|_{M_{g}}$ on $M_{g}$, which have the local form $u_{i}^{1}(y,t)\frac{\partial}{\partial y^{1}}+\cdots+u_{i}^{n}(y,t)\frac{\partial}{\partial y^{n}}$, form a solvable Lie algebra of symmetries of the vector field $\overline{E}_{H}|_{M_{g}}$, therefore the trajectories of the Hamiltonian system $(M,\Omega,\eta,H)$ that live in $M_{g}$ can be found by quadratures.
\end{proof}
Following the previous theorem and the notion of particular integrability for the symplectic framework \cite{escobar2023particular}, we say that a cosymplectic Hamiltonian system, with $n$ degrees of freedom, possesses the property of particular integrability if it is possible to find functionally independent functions in particular involution. The number of these functions must be equal to $n$, and they do not dependent on time explicitly. As presented in \cite{escobar2023particular} for the symplectic case, we have that in the cosymplectic framework the
condition for particular integrability is also maximal, i.e., for a given cosymplectic Hamiltonian system, the largest number of time-independent functionally independent functions in particular involution coincides exactly with the number of degrees of freedom of the system.

\subsection{Contact Hamiltonian systems}
\label{subcontact}

Let $(M,\theta,H)$ be a contact Hamiltonian system with $n$ degrees of freedom (${\rm dim}(M)=2n+1$). Below, we introduce the notion of particular integral.
\begin{de}
A particular integral $g\in C^{\infty}(M)$ of $(M,\theta,H)$ is a function that obeys the equation $X_{H}g=ag$ for a function $a\in C^{\infty}(M)$.
\end{de}
We can observe that with this definition we still have the original characterization $\lbrace g,H\rbrace=ag$ (where in this case $\lbrace ,\rbrace$ defines a Jacobi bracket). Indeed, we know that $X_{H}g=\lbrace g,H\rbrace-gRH$, so $g\in C^{\infty}(M)$ is a particular integral if and only if $ag=\lbrace g,H\rbrace-gRH$ for some function $a\in C^{\infty}(M)$, if and only if $\lbrace g,H\rbrace=ag+gRH=bg$ with $b=a+RH\in C^{\infty}(M)$.

In \cite{BG2021,GGMRR2020,LL2020} the concept of dissipated quantities is viewed as a generalization of the notion of constants of motion (conserved quantities) for dissipative systems; namely, for the Hamiltonian system $(M,\theta,H)$ a function $g\in C^{\infty}(M)$ is called a dissipated quantity if $X_{H}g=-gRH$. 

The present notion of particular integral generalizes both concepts of constant of motion and dissipated quantity. Indeed, it is clear that a particular integral $g$ such that $X_{H}g=ag$ with $a=0$ is a constant of motion and with $a=-RH$ is a dissipated quantity. 

Let us assume that $g\in C^{\infty}(M)$ is a particular integral of $(M,\theta,H)$, we know that we can look for the trajectories of the system that live in $M_{g}=\lbrace\, x\in M: \ g(x)\,=\,0\,\rbrace $, which provided that $0$ is a regular value of $g$, is a smooth submanifold of dimension $2n+1-1=2n$. In canonical coordinates $(q^{1},\cdots,q^{n},p_{1},\cdots,p_{n},z)$ on $(M,\theta)$, the trajectories of the system are the solutions of the dissipative Hamilton's equations of motion 
\begin{equation}
\dot{q^{i}} =\frac{\partial H}{\partial p_{i}}, \hspace{1cm}
\dot{p_{i}} =-\left( \frac{\partial H}{\partial q^{i}}+p_{i}\frac{\partial H}{\partial z}\right) , \hspace{1cm}
\dot{z}=p_{i}\frac{\partial H}{\partial p_{i}}-H.
\end{equation}
Canonical transformations in contact mechanics are defined as transformations leaving the contact form invariant \cite{AE2023}, they are a particular case of the so called contact transformations, which are transformations leaving the contact form invariant up to a multiplicative conformal factor \cite{Arnold78,BCT2017}. A canonical transformation of the form $(q,p,z)\mapsto (Q,P,Z)$ such that $Q_{n}=g$ can be obtained by a generating function of the form $Z=z-F_{1}(q^{1},\cdots,q^{n},Q^{1},\cdots,Q^{n-1},g)$, where $F_{1}(q^{1},\cdots,q^{n},Q^{1},\cdots,Q^{n-1},g)$ is the generating function of a symplectic canonical transformation \cite{BCT2017}. If we can find a canonical transformation $(q,p,z)\mapsto (Q,P,Z)$ such that $P_{n}=g$, then in an analogous way as the reduction process in the symplectic and cosymplectic cases, we can find trajectories of $(M,\theta,H)$ that are trajectories of a reduced dynamical system where the dynamics is the projection of the original dynamics into a submanifold of the phase space in the same sense as in the symplectic case \cite{escobar2023particular}.

\begin{ex}
Let us investigate the motion of a point particle in a vertical plane under the influence of  constant gravity and the air friction. In canonical coordinates $(x,y,p_{x},p_{y},z)$ the Hamiltonian function is $H=\frac{1}{2m}(p_{x}^{2}+p_{y}^{2})+mgy+\gamma z$ \cite{GGMRR2020}. We have that $g=p_{x}$ is a dissipated quantity, therefore a particular integral, indeed $\lbrace p_{x},H\rbrace=0$. By taking $p_{x}=0$ we have that the system of Hamilton's equations reduces to 
\begin{equation}
\left\lbrace \begin{array}{c}
\dot{y}= \frac{\partial K}{\partial p_{y}}\\ 
\dot{p}_{y}= -\frac{\partial K}{\partial y}+p_{y}\frac{\partial K}{\partial z}\\
\dot{z}= p_{y}\frac{\partial K}{\partial p_{y}}-K
\end{array} \right.  \ ,
\end{equation}
and afterwards integrating the equation $\dot{x}=\frac{\partial H}{\partial p_{x}}$, where $K=H\big|_{p_{x}=0}=\frac{1}{2m}p_{y}^{2}+mgy+\gamma z$.
\end{ex}

Now we address the problem of introducing the study of particular integrability or the conditions form particular integrability in the framework of contact geometry. The notion of Liouville integrability in contact Hamiltonian mechanics differs from the one in the symplectic or cosymplectic framework; these differences are due to geometric aspects, in fact we have that symplectic and cosymplectic structures define Poisson structures on the phase space of a mechanical system, on the other hand, a contact structure defines a strictly Jacobi structure (these last structures are not common in the literature and are less known than the Poisson ones). Jacobi structures are more general than the Poisson ones, in fact brackets defined by Jacobi structures do not satisfy Leibniz rule, the actually fulfils the weak Leibniz
rule. Poisson and Jacobi structures were introduced by Lichnerowicz \cite{lichnerowicz1977varietes,lichnerowicz1978varietes}. In \cite{Boyer2011,Visinescu2017} the following definition is presented:
\begin{de}
A contact Hamiltonian system $(M,\theta,H)$ with $n$ degrees of freedom is completely integrable (in the Liouville sense) if there exist $n + 1$ constants of motion $H,g_{1},\cdots,g_{n}$ that are independent and in involution.
\end{de}
In this definition the functions $g_{1},\cdots,g_{k}$ are said to be independent if the corresponding Hamiltonian vector fields $X_{g_{1}},\cdots,X_{g_{k}}$ are linearly independent. This is not equivalent to the condition that the differentials $dg_{1},\cdots,dg_{k}$ are linearly independent, since the last condition is not satisfied when one of the Hamiltonian vector fields is $\mathbb{R}$-proportional to the Reeb vector field (the Hamiltonian function of the Reeb vector field is the constant $1$) \cite{Boyer2011}, i.e., under this context, the independence of functions is not equivalent to our notion of functional independence. Contact Hamiltonian systems with Hamiltonian function $H$ constant are called of Reeb type, in these cases the Hamiltonian vector field $X_{H}$ is $\mathbb{R}$-proportional to the Reeb vector field $R$ and the Hamilton's equations of motion are trivial, so we exclude this case in the notion on particular integrability. We have the following fundamental result relating functionally independent functions with linearly independent vector fields. 
\begin{lem}
\label{lem3}
If $g_{1},\cdots,g_{k}\in C^{\infty}(M)$ are functionally independent functions then the Hamiltonian vector fields $X_{g_{1}},\ldots,X_{g_{k}}$ are linearly independent.
\end{lem}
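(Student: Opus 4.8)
The plan is to follow the strategy of the proof of Lemma \ref{lem2}, with one minor twist dictated by the contact structure: whereas in the Poisson (symplectic or cosymplectic) case the natural contraction of a Hamiltonian vector field with the fundamental $2$-form returns a differential, in the contact case the natural contraction is against the contact form $\theta$ itself and it returns the function. Concretely, on $(M,\theta)$ the Hamiltonian vector field $X_g$ of any $g\in C^{\infty}(M)$ satisfies $\theta(X_g)=-g$ (with the sign conventions of \cite{AE2023,azuaje2024lie}; the argument below is insensitive to this overall sign). This is precisely the relation that produces the equation $\dot z = p_i\,\partial H/\partial p_i - H$ appearing in the dissipative Hamilton equations above, written for $g$ in place of $H$.

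First I would assume a relation $\alpha_1 X_{g_1}+\cdots+\alpha_k X_{g_k}=0$ with $\alpha_1,\dots,\alpha_k\in\mathbb{R}$, and contract it with $\theta$. Using $\theta(X_{g_i})=-g_i$, this gives the identity of functions $\alpha_1 g_1+\cdots+\alpha_k g_k=0$ on $M$. Taking the exterior derivative of this identity yields $\alpha_1\,dg_1+\cdots+\alpha_k\,dg_k=0$ on all of $M$. Finally, since $g_1,\dots,g_k$ are functionally independent, the regular points of $G=(g_1,\dots,g_k)$ form a dense subset of $M$, and at any such point $dg_1,\dots,dg_k$ are linearly independent covectors; evaluating the previous identity there forces $\alpha_1=\cdots=\alpha_k=0$. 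Hence no nontrivial real linear combination of $X_{g_1},\dots,X_{g_k}$ vanishes, i.e.\ these vector fields are linearly independent.

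I do not anticipate a real obstacle in carrying this out; the two things to be careful about are fixing the normalization of $X_g$ relative to $\theta$ consistently with the conventions already in use, and invoking correctly the density of the set of regular points from the definition of functional independence recalled in the Introduction. It is worth emphasizing --- and I would place this remark immediately after the proof --- that the converse fails: functional independence of the $g_i$ is strictly stronger than linear independence of the vector fields $X_{g_i}$, the constant function $1$ being the standard witness, since $d1=0$ while $X_1$ is a nonzero multiple of the Reeb field $R$. This asymmetry, already mentioned before the statement, is exactly why Reeb-type Hamiltonians have to be excluded in the notion of particular integrability, and why the weaker notion of ``independence through the $X_{g_i}$'' is the one used in the cited definition of Liouville integrability for contact systems.
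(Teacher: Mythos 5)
Your argument is correct, but it takes a genuinely different and noticeably shorter route than the paper's. The paper contracts the hypothetical relation $c_{1}X_{g_{1}}+\cdots+c_{k}X_{g_{k}}=0$ with $d\theta$, obtaining $d(c_{1}g_{1}+\cdots+c_{k}g_{k})=R(c_{1}g_{1}+\cdots+c_{k}g_{k})\,\theta$, and then must dispose of the right-hand side by a separate argument: if $R(\sum c_{i}g_{i})$ were nonzero on an open set, then $R(\sum c_{i}g_{i})\,\theta$ would be a contact form there, which is impossible because it is exact and hence closed; only after this detour does the paper reach $\sum c_{i}\,dg_{i}=0$ and invoke functional independence. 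You instead contract with $\theta$ itself, using $\theta(X_{g})=-g$ (which indeed holds in the paper's conventions, as the $\dot z$ equation confirms), so that the relation $\sum\alpha_{i}g_{i}=0$ appears immediately as an identity of functions; one exterior derivative then lands you at $\sum\alpha_{i}\,dg_{i}=0$ and the same finish at regular points. What your approach buys is economy --- it avoids the contact-form/measure-zero digression entirely and uses only the defining normalization of $X_{g}$ against $\theta$; what the paper's approach buys is structural parallelism with Lemma~\ref{lem2} (where contraction with the fundamental $2$-form is the only option) and, as a by-product, the additional fact that $R(\sum c_{i}g_{i})=0$. Both proofs establish the same statement at the same level of rigor, namely linear independence over $\mathbb{R}$ of the global vector fields, read off at the dense set of regular points. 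Your closing remark on the failure of the converse, witnessed by the constant function $1$ whose Hamiltonian vector field is a nonzero multiple of the Reeb field, is accurate and consistent with the discussion the paper places just before the lemma.
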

\begin{proof}
Let us suppose that $c_{1}X_{g_{1}}+\cdots+c_{k}X_{g_{k}}=0$ for some $c_{1},\cdots,c_{k}\in\mathbb{R}$. We know that 
\begin{equation}
\hspace{-1.0cm}\left( c_{1}X_{g_{1}}+\cdots+c_{k}X_{g_{k}}\right) \lrcorner d\theta=d(c_{1}g_{1}+\cdots+c_{k}g_{k})-R(c_{1}g_{1}+\cdots+c_{k}g_{k})\theta,
\end{equation}
then 
\begin{equation}
d(c_{1}g_{1}+\cdots+c_{k}g_{k})-R(c_{1}g_{1}+\cdots+c_{k}g_{k})\theta=0,
\end{equation}
or equivalently
\begin{equation}
d(c_{1}g_{1}+\cdots+c_{k}g_{k})=R(c_{1}g_{1}+\cdots+c_{k}g_{k})\theta.
\end{equation}
On the other hand, for any $f\in C^{\infty}(M)$ we have $f\theta\wedge d(f\theta)=f^{2}\theta\wedge d\theta$, then 
\begin{equation}
f\theta\wedge d(f\theta)^{n}=f^{n}\theta\wedge d\theta^{n}.
\end{equation}
So if $f\neq 0$ in an open subset of $M$ then $f\theta$ is a contact form on than open submanifold. Now if we suppose that $R(c_{1}g_{1}+\cdots+c_{k}g_{k})\neq 0$ in an open subset of $M$ then $R(c_{1}g_{1}+\cdots+c_{k}g_{k})\theta$ is a contact form on than open submanifold, but it cannot be since $d(R(c_{1}g_{1}+\cdots+c_{k}g_{k})\theta)=d(d(c_{1}g_{1}+\cdots+c_{k}g_{k}))=0$; so we must have that $R(c_{1}g_{1}+\cdots+c_{k}g_{k})$ is zero except possibly in a zero measure set, therefore $d(c_{1}g_{1}+\cdots+c_{k}g_{k})=0$ which implies that $c_{1}=\cdots=c_{k}=0$, i.e., for regular points (points where the differential $dg_{1},\cdots,dg_{k}$ are linearly independent) the vector fields $X_{g_{1}},\ldots,X_{g_{k}}$ are linearly independent.
\end{proof}

We can observe that in the definition of completely integrable contact Hamiltonian systems (in the Liouville sense) it is required the existence of $n+1$ constants of motion and, in addition, the Hamiltonian function is one of them. In general, the Hamiltonian function is not a constant of motion. In fact, contact Hamiltonian systems where the Hamiltonian function is a constant of motion (the Hamiltonian function is invariant under the flow of the Reeb vector field) are called good contact Hamiltonian systems; and completely integrable contact Hamiltonian system where the constants of motion are also invariant under the flow of the Reeb vector field are called completely good \cite{Boyer2011}. Even for the notion of non-commutative integrability in the contact case, it is required that the Hamiltonian function is a constant of motion \cite{Jovanovic2012,JJ2015}. The solutions of the Hamilton equations of motion of a completely good contact Hamiltonian system can be found by quadratures provided that the constants of motion form a solvable Lie algebra under the Jacobi bracket \cite{azuaje2024lie}.

We have the following theorem on integrability involving particular involution in the contact framework.
\begin{te}
\label{tePI3}
Let $(M,\theta,H)$ be a good contact Hamiltonian system. If $H=g_{1},\cdots,g_{n}\in C^{\infty}(M)$ are functionally independent and in particular involution functions such that $Rg_{i}=0$ then the trajectories of the system $(M,\theta,H)$ that live in $M_{g}=\lbrace x\in M: \ g_{1}(x)=\cdots=g_{n}(x)=0\rbrace$ can be found by quadratures. 
\end{te}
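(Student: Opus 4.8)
The plan is to mirror the proof of Theorem~\ref{tePI2}, the two genuinely new features being that the Reeb field will play the role that the time coordinate played in the cosymplectic case, and that the weak (Jacobi) Leibniz rule obeyed by $f\mapsto X_f$ forces extra care when Hamiltonian vector fields are restricted to $M_g$. First I would set up the geometry of $M_g$: since $H=g_1,g_2,\dots,g_n$ are functionally independent, $0$ is a regular value of $G=(g_1,\dots,g_n)$, and the submersion level set theorem makes $M_g$ a smooth submanifold of dimension $(2n+1)-n=n+1$; because $Rg_i=0$ and, in canonical coordinates $(q,p,z)$, the Reeb field is $R=\partial/\partial z$, each $g_i$ is independent of $z$, so $z$ restricts to a coordinate on $M_g$ and one can choose local coordinates $(y^1,\dots,y^n,z)$ there. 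Then, using that the system is good ($RH=0$), one has $X_Hg_i=\{g_i,H\}-g_iRH=\{g_i,g_1\}=\sum_k a^{i1}_k g_k$, which vanishes on $M_g$; hence $X_H$ is tangent to $M_g$ and the trajectories of $(M,\theta,H)$ contained in $M_g$ are precisely the integral curves of the vector field $X_H|_{M_g}$ on the $(n+1)$-manifold $M_g$.

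The heart of the argument is to exhibit enough symmetries of $X_H|_{M_g}$. The fields $X_{g_1}=X_H,X_{g_2},\dots,X_{g_n}$ are pointwise linearly independent by Lemma~\ref{lem3}; they are tangent to $M_g$ because $X_{g_i}g_j=\{g_j,g_i\}-g_jRg_i=\{g_j,g_i\}$ vanishes on $M_g$; and, since $f\mapsto X_f$ is a Lie algebra morphism for the Jacobi bracket, $[X_{g_i},X_{g_j}]=X_{\{g_i,g_j\}}$, so writing $\{g_i,g_j\}=\sum_k a^{ij}_k g_k$ and using that $X_{fh}$ differs from $fX_h+hX_f$ only by a term proportional to $fh$, every correction term carries a factor $g_k$ and one obtains $[X_{g_i},X_{g_j}]|_{M_g}=\sum_k a^{ij}_k|_{M_g}\,X_{g_k}|_{M_g}$. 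The missing direction on the $(n+1)$-dimensional $M_g$ is supplied by the Reeb field: $R$ is tangent to $M_g$ (as $Rg_i=0$), it is a symmetry of $X_H$ and commutes with every $X_{g_i}$ because $[R,X_H]=X_{RH}=0$ and $[R,X_{g_i}]=X_{Rg_i}=0$, and it is pointwise independent of $X_{g_1}|_{M_g},\dots,X_{g_n}|_{M_g}$ since $\theta(R)=1$ whereas $\theta(X_{g_i})=-g_i=0$ on $M_g$. Thus on $M_g$ we have $n+1$ pointwise independent vector fields $X_H|_{M_g}=X_{g_1}|_{M_g},X_{g_2}|_{M_g},\dots,X_{g_n}|_{M_g},R|_{M_g}$ whose mutual Lie brackets stay in the span of the same set, with $R$ central; this is a solvable structure of symmetries of $X_H|_{M_g}$, and, exactly as in the proof of Theorem~\ref{tePI2} and in \cite{azuaje2024lie}, it allows the integral curves of $X_H|_{M_g}$, i.e.\ the trajectories of $(M,\theta,H)$ lying in $M_g$, to be obtained by quadratures.

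The step I expect to be the main obstacle is the bracket bookkeeping on $M_g$. Unlike the symplectic and cosymplectic settings, here $f\mapsto X_f$ satisfies only the weak Leibniz rule, so one must verify carefully that the Reeb-proportional correction terms appearing in $X_{a^{ij}_k g_k}$ vanish identically on $M_g$, so that the restricted fields $X_{g_i}|_{M_g}$ together with $R|_{M_g}$ genuinely close up into an involutive, solvable system of symmetries of $X_H|_{M_g}$ in the sense required by the Lie-type quadrature theorems. It is precisely the ``good'' hypothesis $RH=0$ (which makes $X_H$ tangent to $M_g$ and $R$ a symmetry) together with $Rg_i=0$ (which furnishes the transverse coordinate and kills $[R,X_{g_i}]$) that makes the dimension count close; no analogue of this extra field was needed in the cosymplectic argument.
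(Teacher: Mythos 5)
Your proof follows essentially the same route as the paper's: the same dimension count making $M_g$ an $(n+1)$-dimensional submanifold with coordinates $(y^1,\dots,y^n,z)$, tangency of $X_H$ and of the $X_{g_i}$ to $M_g$, Lemma~\ref{lem3} for pointwise independence, the Reeb field $R$ supplying the $(n+1)$-st commuting symmetry via $[R,X_H]=X_{\{1,H\}}=0$, and the Lie theorem on integration by quadratures to conclude. If anything you are more careful than the paper at one step: the paper asserts that $X_{g_1}|_{M_g},\dots,X_{g_n}|_{M_g}$ generate an \emph{Abelian} algebra, whereas particular involution only gives $[X_{g_i},X_{g_j}]|_{M_g}=-X_{\{g_i,g_j\}}|_{M_g}=-\sum_k a^{ij}_k|_{M_g}\,X_{g_k}|_{M_g}$, i.e.\ closure within the span as you correctly track.
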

\begin{proof}
From the fact that $g_{1},\cdots,g_{n}$ are functionally independent functions on $M$, we have that $M_{g}=\lbrace x\in M: \ g_{1}(x)=\cdots=g_{n}(x)=0\rbrace$ is a smooth submanifold of $M$ of dimension $dim(M)-n=2n+1-n=n+1$; and since $Rg_{1}=\cdots=Rg_{n}=0$, i.e., the functions $g_{1},\cdots,g_{n}$ are $z$-independent, then around any point in $M_{g}$ we can find local coordinates $(y^{1},\cdots,y^{n},z)$.

On the other hand we have that the vector fields $X_{g_{1}},\cdots,X_{g_{n}}$ are tangent to $M_{g}$, indeed, $X_{g_{j}}g_{i}=\lbrace g_{i},g_{j}\rbrace-g_{i}Rg_{j}$ which is zero on $M_{g}$; so the trajectories of $(M,\theta,H)$ living in $M_{g}$ are integral curves of $X_{H=g_{1}}|_{M_{g}}$, i.e., the trajectories of $(M,\theta,H)$ that live in $M_{g}$ have the local form $\gamma(t)=(y^{1}(t),\cdots,y^{n}(t),z(t))$ and are solutions of the vector differential equation $\dot{x}=X_{H}(x)$ with $x=(y^{1},\ldots,y^{n},z)$ local coordinates on $M_{g}$, which provided the existence of $n+1$ linearly independent vector fields on $M_{g}$ that form a solvable Lie algebra of symmetries of $X_{H}(x)$, can be solved by quadratures.

In the contact framework we also have that the assignment $g\mapsto X_{g}$ defines a Lie algebra antihomomorphism between the Lie algebras $(C^{\infty}(M),\lbrace,\rbrace)$ and $(\mathfrak{X}(M),[,])$ \cite{LL2019}, indeed, for $h,g\in C^{\infty}(M)$ we have
\begin{equation}
-[X_{}h,X_{g}]=X_{\lbrace h,g\rbrace}.
\end{equation}
So we have that $X_{g_{1}}|_{M_{g}},\cdots,X_{g_{n}}|_{M_{g}}$ generate an Abelian (therefore solvable) Lie algebra of symmetries of $X_{H=g_{1}}|_{M_{g}}$ with the Lie bracket of vector fields, in addition the vector field $R$ is also tangent to $M_{g}$ and is a symmetry of $X_{H=g_{1}}|_{M_{g}}$, indeed, we know that $[R,X_{H}]=X_{\lbrace 1,H\rbrace}=0$. We conclude that the vector fields $R,X_{g_{1}}|_{M_{g}},\cdots,X_{g_{n}}|_{M_{g}}$ are linearly independent and generate a solvable (in fact it is an Abelian Lie algebra) Lie algebra of symmetries of $X_{H=g_{1}}|_{M_{g}}$ with the Lie bracket of vector fields, therefore, the trajectories of the Hamiltonian system $(M,\theta,H)$ that live in $M_{g}$ can be found by quadratures.
\end{proof}

\begin{de}
We say that a good contact Hamiltonian system $(M,\theta,H)$ with $n$ degrees of freedom satisfies the requirements for particular integrability if we can find $n$ functionally independent functions $H=g_{1},\cdots,g_{n}\in C^{\infty}(M)$ in particular involution such that $Rg_{i}=0$.
\end{de}

As in the symplectic and cosymplectic cases, we have that the condition for particular integrability is also a maximal condition in the contact case.

\subsection{Cocontact Hamiltonian systems}
\label{subcocontact}

We finish this section by introducing the notions of particular integral and particular integrability in the cocontact framework. The development of this subsection follows the guide of the previous ones. Let $(M,\theta,\eta,H)$ be a cocontact Hamiltonian system with $n$ degrees of freedom (${\rm dim}(M)=2n+2$). We introduce the notion of particular integral as follows.
\begin{de}
A particular integral of $(M,\theta,\eta,H)$ is a function $g\in C^{\infty}(M)$ which satisfies $R_{t}g=0$ and $X_{H}g=ag$ for a function $a\in C^{\infty}(M)$.
\end{de}
We can observe that, as in the previous subsection, in terms of the Jacobi bracket $g$ is a particular integral if and only if $\lbrace g, H\rbrace = bg$ for some function $b\in C^{\infty}(M)$. As in the cosymplectic case we require $R_{t}g=0$ in order to be able to look for the trajectories of the system that live in $M_{g}=\lbrace\, x\in M: \ g(x)\,=\,0\,\rbrace$, indeed, we must have $E_{H}g|_{g=0}=0$, where $E_{H}=X_{H}+R_{t}$ is the dynamical vector field. 

In \cite{GLR2022} dissipated quantities are studied for cocontact Hamiltonian systems.
\begin{de} A function $g\in C^{\infty}(M)$ is called a dissipated quantity of $(M,\theta,\eta,H)$ if it satisfies that $E_{H}g=-gR_{z}H$.
\end{de}
As it is remarked in \cite{BG2021,GLR2022}, in general, the Hamiltonian function $H$ obeys the relation $E_{H}H=-HR_{z}H+R_{t}H$. Thus, whenever $H$ depends explicitly on $t$ it is not a dissipated quantity itself. A dissipated quantity $g$ such that $R_{t}g=0$ is a particular integral.

Using an analogous procedure as in the contact framework, by means of a particular integral, we can obtain solutions of the equations of motion that are also solutions of a reduced system of ordinary differential equations, i.e., we can find trajectories of a cocontact Hamiltonian system that are trajectories of a reduced dynamical system.

As in the contact case, the notion of Liouville integrability for cocontact Hamiltonian systems is restricted to good Hamiltonian systems. A cocontact Hamiltonian system is good if the Hamiltonian function is invariant under the flow of the contact Reeb vector field; in this case the Hamiltonian function is not necessarily a constant of motion, indeed, let $(M,\theta,\eta,H)$ be a good cocontact Hamiltonian system, i.e., $R_{z}H=0$, then $E_{H}H=R_{t}H$ which is not necessarily zero. We say that a good cocontact Hamiltonian with $n$ degrees of freedom is a completely good cocontact Hamiltonian system if there exist $n$ functionally independent constants of motion $g_{1},\cdots,g_{n}$ with $R_{z}g_{i}=0$ that are in involution. The solutions of the Hamilton equations of motion of a completely good cocontact Hamiltonian system can be found by quadratures provided that the constants of motion form a solvable Lie algebra under the Jacobi bracket \cite{azuaje2024lie}. 

We have the following theorem analogous to theorems \ref{tePI2} and \ref{tePI3}.

\begin{te}
\label{tePI4}
Let $(M,\theta,\eta,H)$ be a good cocontact Hamiltonian system. Let $g_{1},\cdots,g_{n}\in C^{\infty}(M)$ be functionally independent and in particular involution functions satisfying $R_{z}g_{i}=R_{t}g_{i}=0$. If for each $i\in \lbrace 1,\cdots,n\rbrace$ $\lbrace g_{i},H\rbrace=c^{i1}g_{1}+\cdots+c^{in}g_{n}$ for some $c^{ij}\in C^{\infty}(M)$, then the trajectories of the Hamiltonian system $(M,\theta,\eta,H)$ living in $M_{g}=\lbrace x\in M: \ g_{1}(x)=\cdots=g_{n}(x)=0\rbrace$ can be found by quadratures. 
\end{te}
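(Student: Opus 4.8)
The plan is to combine the arguments behind Theorems~\ref{tePI2} and~\ref{tePI3}. On a cocontact phase space there are two Reeb directions: $R_{t}$, which in Darboux coordinates $(q^{1},\cdots,q^{n},p_{1},\cdots,p_{n},z,t)$ is $\frac{\partial}{\partial t}$ and, exactly as in the cosymplectic case, contributes the equation $\dot t=1$ to be integrated trivially; and $R_{z}$, the contact Reeb field, which as in the contact case furnishes an extra built-in symmetry. First, since $g_{1},\cdots,g_{n}$ are functionally independent the level set $M_{g}$ is a smooth submanifold of dimension $(2n+2)-n=n+2$, and since $R_{z}g_{i}=R_{t}g_{i}=0$ the $g_{i}$ involve neither $z$ nor $t$, so $M_{g}$ admits adapted local coordinates $(y^{1},\cdots,y^{n},z,t)$. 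Next I check that $E_{H}=X_{H}+R_{t}$ is tangent to $M_{g}$: from the identity $X_{H}g_{i}=\lbrace g_{i},H\rbrace-g_{i}R_{z}H$, the goodness hypothesis $R_{z}H=0$, the assumption $\lbrace g_{i},H\rbrace=c^{i1}g_{1}+\cdots+c^{in}g_{n}$, and $R_{t}g_{i}=0$, one obtains $E_{H}g_{i}=c^{i1}g_{1}+\cdots+c^{in}g_{n}$, which vanishes on $M_{g}$.

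Therefore the trajectories of $(M,\theta,\eta,H)$ that lie in $M_{g}$ are exactly the integral curves of $E_{H}|_{M_{g}}$, which in the coordinates $(y,z,t)$ has the form $E_{H}|_{M_{g}}=v^{1}(y,z,t)\frac{\partial}{\partial y^{1}}+\cdots+v^{n}(y,z,t)\frac{\partial}{\partial y^{n}}+w(y,z,t)\frac{\partial}{\partial z}+\frac{\partial}{\partial t}$, the last coefficient being $1$ because $R_{t}=\frac{\partial}{\partial t}$ while $X_{H}$ carries no $\frac{\partial}{\partial t}$. Integrating $\dot t=1$ reduces the problem to the $(n+1)$-dimensional system $\dot y^{i}=v^{i}(y,z,t)$, $\dot z=w(y,z,t)$, i.e. to integrating $\overline{E}_{H}|_{M_{g}}=v^{1}\frac{\partial}{\partial y^{1}}+\cdots+v^{n}\frac{\partial}{\partial y^{n}}+w\frac{\partial}{\partial z}$. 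I then exhibit $n+1$ symmetries of this field, namely $X_{g_{1}}|_{M_{g}},\cdots,X_{g_{n}}|_{M_{g}}$ and $R_{z}|_{M_{g}}$. Each is tangent to $M_{g}$, since $X_{g_{j}}g_{i}=\lbrace g_{i},g_{j}\rbrace-g_{i}R_{z}g_{j}=\lbrace g_{i},g_{j}\rbrace$ vanishes on $M_{g}$ by particular involution while $R_{z}g_{i}=0$, and none carries a $\frac{\partial}{\partial t}$ component (the coefficients of $X_{g_{i}}$ are built from the $t$-independent $g_{i}$, and $R_{z}=\frac{\partial}{\partial z}$), so all descend to the reduced manifold. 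They are pointwise linearly independent: an argument as in Lemma~\ref{lem3} shows $X_{g_{1}},\cdots,X_{g_{n}}$ independent, and $R_{z}$ lies outside their span since $\theta(R_{z})=1$ whereas $\theta(X_{g_{i}})=-g_{i}=0$ on $M_{g}$; being $n+1$ tangent fields on the $(n+1)$-dimensional reduced manifold, they form a frame. Using the Lie algebra antihomomorphism $g\mapsto X_{g}$ and goodness one gets $[X_{g_{i}},R_{z}]=X_{\lbrace 1,g_{i}\rbrace}=0$ (as $\lbrace 1,g_{i}\rbrace=R_{z}g_{i}=0$) and $[E_{H},R_{z}]=[X_{H},R_{z}]=X_{\lbrace 1,H\rbrace}=0$ (as $[R_{t},R_{z}]=0$ and $\lbrace 1,H\rbrace=R_{z}H=0$), whereas $[X_{g_{i}}|_{M_{g}},X_{g_{j}}|_{M_{g}}]$ and $[\overline{E}_{H}|_{M_{g}},X_{g_{i}}|_{M_{g}}]$ remain inside $\mathrm{span}\lbrace X_{g_{1}}|_{M_{g}},\cdots,X_{g_{n}}|_{M_{g}}\rbrace$ because $\lbrace g_{i},g_{j}\rbrace$ and $\lbrace g_{i},H\rbrace$ lie in the ideal generated by the $g_{i}$. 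Hence these $n+1$ fields span a solvable Lie algebra of symmetries of $\overline{E}_{H}|_{M_{g}}$, and the Lie theorem on integrability by quadratures (as used for Theorems~\ref{tePI2} and~\ref{tePI3}) yields the trajectories of $(M,\theta,\eta,H)$ lying in $M_{g}$ by quadratures.

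The step I expect to be the real obstacle is the verification of the solvable-symmetry hypothesis, just as in the cosymplectic and contact cases: particular involution only forces $\lbrace g_{i},g_{j}\rbrace$ and $\lbrace g_{i},H\rbrace$ into the ideal generated by $g_{1},\cdots,g_{n}$, so the structure functions $a^{ij}_{k}$ and $c^{ij}$ need not vanish on $M_{g}$, and consequently $X_{g_{1}}|_{M_{g}},\cdots,X_{g_{n}}|_{M_{g}}$ neither commute among themselves nor with $\overline{E}_{H}|_{M_{g}}$; one must therefore invoke the generalized Lie theorem \cite{CFGR2015,Cetal2016} for an involutive distribution carrying a solvable flag $\mathrm{span}\lbrace X_{g_{1}}|_{M_{g}}\rbrace\subset\cdots\subset\mathrm{span}\lbrace X_{g_{1}}|_{M_{g}},\cdots,X_{g_{n}}|_{M_{g}}\rbrace\subset\mathrm{span}\lbrace X_{g_{1}}|_{M_{g}},\cdots,X_{g_{n}}|_{M_{g}},R_{z}|_{M_{g}}\rbrace$ that is preserved by all the relevant brackets and normalized by $\overline{E}_{H}|_{M_{g}}$. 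Once this is secured, everything else — the dimension count, the tangency of $E_{H}$, $X_{g_{i}}$ and $R_{z}$ to $M_{g}$, and the frame property — runs entirely parallel to the cosymplectic and contact proofs.
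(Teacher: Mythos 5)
Your proposal follows essentially the same route as the paper's proof: the same dimension count for $M_g$, the same adapted coordinates $(t,y^1,\dots,y^n,z)$ coming from $R_tg_i=R_zg_i=0$, the same tangency computation $E_Hg_i=\{g_i,H\}+R_tg_i$ vanishing on $M_g$, the same trivial integration of $\dot t=1$, and the same choice of the $n+1$ symmetry fields $X_{g_1}|_{M_g},\dots,X_{g_n}|_{M_g},R_z|_{M_g}$ fed into the Lie theorem on integrability by quadratures. The one place where you genuinely diverge is the final solvability step, and there your version is the more careful one: the paper simply asserts that the restricted fields generate an \emph{Abelian} Lie algebra of symmetries of $\overline{E}_H|_{M_g}$, which is not literally true, since $[X_{g_i},X_{g_j}]=\pm X_{\{g_i,g_j\}}$ and $\{g_i,g_j\}=\sum_k a^{ij}_kg_k$ only forces this bracket, restricted to $M_g$, into the $C^\infty$-span $\sum_k a^{ij}_kX_{g_k}|_{M_g}$ rather than making it vanish (the structure functions $a^{ij}_k$, $c^{ij}$ need not vanish on $M_g$). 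Your observation that one must instead invoke the generalized, distributional form of Lie's theorem (involutive distribution with a solvable flag preserved by the brackets and normalized by $\overline{E}_H|_{M_g}$, as in the references the paper itself cites for Lie integrability) correctly identifies and repairs this looseness; it costs a slightly stronger appeal to the quadrature theorem but buys an argument that actually closes. Your explicit justification of the pointwise independence of $R_z$ from the $X_{g_i}$ via $\theta(R_z)=1$ versus $\theta(X_{g_i})=-g_i=0$ on $M_g$ is likewise a detail the paper leaves to "as in the contact case."
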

As in the cosymplectic case, we have not taken $H$ as one of the functions $g_{1},\cdots,g_{n}$ because $R_{t}H$ is not necessarily zero. We have the following result analogous to lemmas \ref{lem2} and \ref{lem3}. 
\begin{lem}
\label{lem4}
If $g_{1},\cdots,g_{k}\in C^{\infty}(M)$ are functionally independent functions satisfying $R_{t}g_{i}=0$ for $i=1,\cdots,k$ then the Hamiltonian vector fields $X_{g_{1}},\ldots,X_{g_{k}}$ are linearly independent.
\end{lem}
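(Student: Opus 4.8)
The plan is to follow the template of Lemma~\ref{lem3}, with the single Reeb field replaced by the pair $(R_{z},R_{t})$ of a cocontact manifold. Suppose $c_{1}X_{g_{1}}+\cdots+c_{k}X_{g_{k}}=0$ for some constants $c_{1},\ldots,c_{k}\in\mathbb{R}$, and put $g:=c_{1}g_{1}+\cdots+c_{k}g_{k}$. Since the assignment $h\mapsto X_{h}$ is $\mathbb{R}$-linear (the cocontact structure identifies each $X_{h}$ uniquely with its contractions, and the defining relations are linear in $h$), we have $X_{g}=0$; moreover $R_{t}g=c_{1}R_{t}g_{1}+\cdots+c_{k}R_{t}g_{k}=0$. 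The objective is to show that this forces $dg=0$: once that is known, functional independence of $g_{1},\ldots,g_{k}$, i.e.\ linear independence of $dg_{1}|_{p},\ldots,dg_{k}|_{p}$ at each point $p$ of a dense subset of $M$, immediately yields $c_{1}=\cdots=c_{k}=0$, which is the assertion.

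To produce $dg=0$ I would invoke the defining equation of the cocontact Hamiltonian vector field, $\iota_{X_{h}}d\theta=dh-(R_{z}h)\,\theta-(R_{t}h)\,\eta$. Applying it with $h=g$: the hypothesis $R_{t}g=0$ removes the $\eta$-term and $X_{g}=0$ annihilates the left-hand side, leaving $dg=(R_{z}g)\,\theta$. Set $f:=R_{z}g\in C^{\infty}(M)$, so that $f\theta=dg$ is exact and hence $d(f\theta)=0$. Exactly as in Lemma~\ref{lem3}, I would expand $d(f\theta)=df\wedge\theta+f\,d\theta$ and use $\theta\wedge\theta=0$ to discard every term carrying two factors of $\theta$; this gives the identity $\eta\wedge(f\theta)\wedge\bigl(d(f\theta)\bigr)^{n}=f^{\,n+1}\,\eta\wedge\theta\wedge(d\theta)^{n}$. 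The left side vanishes because $d(f\theta)=0$, while $\eta\wedge\theta\wedge(d\theta)^{n}$ is a volume form on $M$ by the cocontact nondegeneracy condition; hence $f^{\,n+1}\equiv0$, so $f\equiv0$, and consequently $dg=0$.

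The crux I expect is the passage from $dg=(R_{z}g)\theta$ to $R_{z}g=0$: unlike in the contact case, $\theta$ by itself is not a contact form on the even-dimensional $M$, so the contradiction must be extracted from the full cocontact volume form $\eta\wedge\theta\wedge(d\theta)^{n}$ rather than from $\theta\wedge(d\theta)^{n}$, and one has to check that the cross terms in $\bigl(d(f\theta)\bigr)^{n}$ genuinely drop out, which they do solely because of $\theta\wedge\theta=0$. Everything else reduces to the $\mathbb{R}$-linearity of $h\mapsto X_{h}$, the definition of functional independence, and linear algebra; notice that the hypothesis $R_{t}g_{i}=0$ enters precisely where the $\eta$-term is removed, which is why it is needed in the statement. (If instead one works with the convention $\iota_{X_{h}}\theta=-h$, then $X_{g}=0$ gives $g=0$ at once and $dg=0$ is immediate; I have phrased the argument through $d\theta$ to mirror the proof of Lemma~\ref{lem3}.)
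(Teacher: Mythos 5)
Your argument is correct and is essentially the proof the paper intends: the paper omits it, stating only that it is ``analogous to the previous lemmas,'' and your write-up is precisely that analogue, with the one genuine modification correctly identified (extracting the contradiction from the cocontact volume form $\eta\wedge\theta\wedge(d\theta)^{n}$ rather than from $\theta\wedge(d\theta)^{n}$, with the hypothesis $R_{t}g_{i}=0$ killing the $\eta$-term). As a minor remark, your exponent $f^{n+1}$ is the correct one, whereas the paper's Lemma~\ref{lem3} displays $f^{n}$, an inconsequential typo.
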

The proof is also analogous to the previous lemmas. Finally we show the proof of theorem \ref{tePI4}.
\begin{proof}
In this case we have that $M_{g}=\lbrace x\in M: \ g_{1}(x)=\cdots=g_{n}(x)=0\rbrace$ is again a (smooth) submanifold of $M$ of lower dimension $dim(M)-n=2n+2-n=n+2$, and since $R_{z}g_{1}=\cdots=R_{z}g_{n}=R_{t}g_{1}=\cdots=R_{t}g_{n}=0$, i.e., the functions $g_{1},\cdots,g_{n}$ are $z$-independent and time-independent, then around any point in $M_{g}$ we can find local coordinates $(t,y^{1},\cdots,y^{n},z)$.

On the other hand we have that the evolution vector field is tangent to $M_{g}$, indeed, $E_{H}g_{i}=\lbrace g_{i},H\rbrace+R_{t}g_{i}=c^{i1}g_{1}+\cdots+c^{in}g_{n}$ which is zero on $M_{g}$; so the trajectories of the Hamiltonian system $(M,\theta,\eta,H)$ that live in $M_{g}$ are integral curves of $E_{H}|_{M_{g}}$ which in local coordinates $(t,y^{1},\cdots,y^{n},z)$ on $M_{g}$ has the form
\begin{equation}
E_{H}(t,y,z)=u^{1}(t,y,z)\frac{\partial}{\partial y^{1}}+\cdots+u^{n}(t,y,z)\frac{\partial}{\partial y^{n}}+u^{z}(t,y,z)\frac{\partial}{\partial z}+\frac{\partial}{\partial t},
\end{equation} 
i.e., the trajectories of $(M,\theta,\eta,H)$ living in $M_{g}$ have the local form $\gamma(t)=(t,y^{1}(t),\cdots,y^{n}(t),z(t))$ where the functions $y^{1}(t),\cdots,y^{n}(t),z(t)$ are solutions of the system of differential equations
\begin{equation}
\left\lbrace \begin{array}{c}
\dot{y}^{1}=u^{1}, \\
\vdots \\
\dot{y}^{n}=u^{n}, \\
\dot{z}=u^{z},
\end{array} \right. 
\end{equation} 
which can be solved by quadratures provided the existence of $n+1$ linearly independent vector fields of the form $w^{1}(t,y,z)\frac{\partial}{\partial y^{1}}+\cdots+w^{n}(t,y,z)\frac{\partial}{\partial y^{n}}+w^{z}(t,y,z)\frac{\partial}{\partial z}$ that form a solvable Lie algebra of symmetries of the vector field $\overline{E}_{H}|_{M_{g}}=u^{1}(t,y,z)\frac{\partial}{\partial y^{1}}+\cdots+u^{n}(t,y,z)\frac{\partial}{\partial y^{n}}+u^{z}(t,y,z)\frac{\partial}{\partial z}$. As in the contact case we have that $R_{z},X_{g_{1}}|_{M_{g}},\cdots,X_{g_{n}}|_{M_{g}}$ are linearly independent and generate an Abelian Lie algebra of symmetries of $\overline{E}_{H}|_{M_{g}}=X_{H}|_{M_{g}}$ with the Lie bracket of vector fields. Therefore, the trajectories of  $(M,\theta,\eta,H)$ living in $M_{g}$ can be found by quadratures.
\end{proof}

\begin{de}
A good cocontact Hamiltonian system $(M,\theta,\eta,H)$, possessing $n$ degrees of freedom, satisfies the requirements for particular integrability if we can find $n$ functions $g_{1},\cdots,g_{n}\in C^{\infty}(M)$ functionally independent and in particular involution satisfying $R_{z}g_{i}=R_{t}g_{i}=0$ for each $i\in \lbrace 1,\cdots,n\rbrace$ and  $\lbrace g_{i},H\rbrace=c^{i1}g_{1}+\cdots+c^{in}g_{n}$ for some $c^{ij}\in C^{\infty}(M)$.
\end{de}

Of course, like in the contact case, the condition for particular integrability in the cocontact framework is a maximal condition.

\section{Conclusions}

We have introduced the notions of particular integral and particular integrability for classical Hamiltonian systems on cosymplectic, contact and cocontact manifolds. These rigorous results extend and complement those presented in \cite{escobar2023particular} for the symplectic case. A constant of motion (a first Liouville integral) can be regarded as a special case of the concept of particular integral. Especially, in the contact framework the present study generalizes the important notion of a dissipated quantity. For a non necessarily integrable (in the Liouville sense) cosymplectic, contact or cocontact Hamiltonian system we have shown that, provided the existence of a sufficient number of functionally independent particular integrals, it is possible to find some of the trajectories by quadratures. If we do not posses enough number of particular integrals required to achieve particular integrability, we are still able to reduce the Hamilton's equations of motion to a simpler form. 

For future work, we plan to investigate the existence of infinitesimal symmetries related to particular integrals in the above mentioned geometric frameworks. Particular superintegrability (more particular integrals than degrees of freedom) from a geometric perspective is also an interesting open question, (for an algebraic novel geometric framework for conformally superintegrable systems see, for instance, \cite{Kress}).

\section*{Acknowledgments}

The author R Azuaje wishes to thank CONAHCYT (México) for the financial support through a postdoctoral fellowship in the program Estancias Posdoctorales por México 2022. In addition, R Azuaje thanks Alessandro Bravetti for his helpfull comments on some aspects of contact geometry.

A M Escobar Ruiz would like to thank the support from Consejo Nacional de Humanidades, Ciencias y Tecnologías (CONAHCyT) of Mexico under Grant CF-2023-I-1496 and from UAM research grant 2024-CPIR-0.

The authors thank Ond\v{r}ej Kub\u{u} for some relevant comments and helpful discussions.

\bibliography{refs} 
\bibliographystyle{unsrt} 

\end{document}